\definecolor{FOB}{rgb}{0.,0.,0.7}
\definecolor{FOR}{rgb}{0.65,0.,0.}
\definecolor{FOfond}{rgb}{0.98,0.98,0.91}
\def\eqref#1{(\ref{#1})}
\def\Description#1{\relax}
\def\arctan{{\rm arctan}}
\def\dd{{\rm d}}
\def\rd{{\rm d}}
\def\cC{{\mathcal C}}
\def\kts{{\rm kts}}
\def\cO{{\mathcal O}}
\def\N{{\bf N}}
\def\R{{\bf R}}
\def\bull{\vrule height .9ex width .8ex depth -.1ex }
\def\rd{{\rm d}}
\def\rj{{\rm j}}
\def\lbb{[\![}
\def\rbb{]\!]}
\newcounter{thenum}
\def\texttheo{\relax}
\newenvironment{theorem}{\medbreak\refstepcounter{thenum}
\noindent\textsc{Theorem} %
\thethenum. \texttheo ---  \it  }{\rm }
\newenvironment{e-proposition}{\medbreak\refstepcounter{thenum}
\noindent\textsc{Proposition} \thethenum. ---  \it  }{\rm }
\newenvironment{e-definition}{\medbreak\refstepcounter{thenum}
\noindent\textsc{Definition} \thethenum. ---  \it  }{\rm }
\newenvironment{e-rem}{\medbreak\refstepcounter{thenum}{}%
 \thethenum) }{}
\newenvironment{e-ex}{\medbreak\refstepcounter{thenum}{}%
 \thethenum) }{}
\newenvironment{proof}{\smallbreak\noindent{\sc Proof.} --- \rm}{\quad\bull\smallskip\rm}
\newenvironment{definition}{\medbreak\refstepcounter{thenum}
\noindent\textsc{Définition} \thethenum. ---  \it  }{\rm }
\begin{document}

\thispagestyle{empty}

\begin{center}{\LARGE\parindent=0pt{\color{FOR}
Extending Flat Motion Planning to Non-flat Systems.\\
  Experiments on Aircraft Models Using Maple

}}
\end{center}
\vskip2cm

\hbox to \hsize{\parindent =0pt\hbox to 2.5cm{\hfill}\hss\vbox{\hsize
    = 7cm {\large François \textsc{Ollivier}} 
\bigskip

LIX, UMR CNRS 7161 

École polytechnique 

91128 Palaiseau \textsc{cedex}

France

\smallskip

{\tiny francois.ollivier@lix.polytechnique.fr}
} \hss}
\vskip 0.3cm

\begin{center}\parindent =0pt May, 12$^{\rm th}$  2022
\end{center}
\vfill

{\small

\noindent \textbf{Abstract.} 
Aircraft models may be considered as flat if one neglects some terms
associated to aerodynamics. Computational experiments in Maple show
that in some cases a suitably designed feed-back allows to follow such
trajectories, when applied to the non-flat model. However some
maneuvers may be hard or even impossible to achieve with this flat
approximation.  In this paper, we propose an iterated process to
compute a more achievable trajectory, starting from the flat reference
trajectory. More precisely, the unknown neglected terms in the flat
model are iteratively re-evaluated using the values obtained at the
previous step. This process may be interpreted as a new trajectory
parametrization, using an infinite number of derivatives, a property
that may be called \emph{generalized flatness}.  We illustrate the
pertinence of this approach in flight conditions of increasing
difficulties, from single engine flight, to aileron roll.
  \smallskip

  \noindent Keywords: flat systems, motion planning, aircraft control,
Newton operator, symbolic-numeric computation, generalized flatness.

\vfill\eject

\thispagestyle{empty}

\noindent \textbf{Résumé.}  Des modèles d'avions peuvent être
considérés comme plats si on néglige certains termes associés à
l'aérodynamique. Des expériences de calcul en Maple montrent que dans
certain cas, un bouclage convenable permet de suivre de telles
trajectoires, en utilisant le modèle non plat. Certaines manœuvres
peuvent néanmoins être difficiles, voire impossible à réaliser avec
cette approximation plate. Dans cet article, nous proposons un
processus itératif pour calculer une trajectoire plus aisée à suivre,
en commençant par l'approximation plate de référence. Plus
précisément, les termes inconnus négligés dans le modèle plat, sont
itérativement réévalués, en utilisant les valeurs obtenues à l'étape
précédente. Ce processus peut être interprété comme un nouveau
paramétrage utilisant une infinité de dérivées, une propriété qui peut
être appelée \emph{platitude généralisée}. Nous illustrons la
pertinence de cette approche dans des conditions de vol de difficulté
croissante, incluant un vol avec un seul moteur, une descente en vol
plané avec glissade et une manœuvre de voltige.
\smallskip

  \noindent Mots-clés: systèmes plats, planification de trajectoire,
  contrôle de vol, opérateur de Newton, calcul symbolique-numérique,
  platitude généralisée.

}

\vfill\eject

\selectlanguage{english}

\section*{Introduction}

We illustrate the use of computer algebra for experimental
investigations relying on numerical simulations in the field of
automatic control. We consider here the notion of flat systems and
some possible generalizations in order to solve motion planning
problems for aircrafts models.

The solutions of flat systems~\cite{FLMR95,FLMR99,Levine09} can be
parametrized by a set of functions, called \textit{flat outputs}, and
a finite number of their derivatives. This property is particularly
useful for motion planning of non-linear systems, \textit{i.e.} the
design of a control law able to generate a trajectory joining a given
starting point to a given end point. Though flatness is not a generic
property, flat systems are ubiquitous in practice. There is no known
complete algorithm to decide flatness (see \textit{e.g.}
Lévine~\cite{Levine11} for necessary and sufficient conditions), but
the flat outputs have often simple expressions that may be guessed by
physical considerations.

This work takes place in a systematic study of \textit{apparent
  singularities} of flat systems, \textit{i.e.} points where the
parametrization provided by given flat outputs ceases to be
defined~\cite{KLO18,KLO20}.  In practice, such situations are more
likely to appear when a failure modifies the symmetries of the system
or involves the loss of some controls, thus requiring an alternative
flat output.

Among the classical examples of flat systems are cars, trucks with
trailers, cranes, aircrafts, etc. Note that aircraft models have been
studied since long in~\cite{MartinPHD,Martin1996}. Although
aerodynamics models are complex and may involve many parameters, they
turn out to be flat if one neglects the thrust created by control
surfaces (rudder, elevator and ailerons) or associated to angular
speeds, a legitimate approximation in many cases.

In practice, we aim at designing a suitable feed-back able to
compensate both perturbations and modelling errors.  In order to
investigate its robustness in the context of maneuvers and failures of
increasing difficulties, we have designed a package in Maple.  Its
implementation is presented and we illustrate its use by a few
numerical simulations of trajectory tracking. More details will be
given in a forthcoming papers with Y.J.~Kaminski.

We focus here on a notion of \emph{generalized
  flatness}, suggested by computational experiments, trying
to improve trajectory tracking when the design of
a suitable feed-back becomes hard. We first noticed that, considering
trajectories with constant controls and attitude angles, these
controls and angles may be computed by solving an algebraic system,
\textit{i.e.} a non-differential one. The real model is in this case
more complicated, but of the same nature as the simplified one. We
sometimes needed to use an alternative simplified model, where control
values are not set to $0$ but to constant values provided by
\textit{ad hoc} calibration functions.

We tried then to go further and to improve the parametrization
provided by the simplified model. We have needed to neglect some
terms, depending on the controls $U$. As the flat parametrization
provides a first evaluation $U^{[0]}$ for the controls, we can use
this value in the perturbation terms of the full model, instead of
setting them to $0$. We get so a second evaluation $U^{[1]}$ for the
controls that may be used to improve the evaluation of the
perturbation terms, providing a third evaluation $U^{[2]}$\dots\ This
process can be iterated \textit{ad libitum}. In our experiments, this
simple change provides, using only $4$ iterations, a precise motion
planning for the full aerodynamic model, which suggests the
introduction of a notion of \emph{generalized flatness} for such
systems. ``Precise'' means here that the trajectories remain close to
the values of the flat outputs, without using any feed-back. See
simulations in sec.~\ref{sec:examples}.  As each iteration implies
more derivatives of the flat outputs, such a generalized flat
parametrization potentially involves an \textit{infinite} number of
derivatives of the flat outputs of the unperturbed flat system.

Flat systems and their singularities are introduced in sec.~\ref{sec:flat}.
Detailed aircraft models, for which this motion planning algorithm has
been taylored, are presented in sec.~\ref{sec:models} and
their approximate flatness and singularities are studied in
sec.~\ref{sec:flat-outputs}. Then, their motion planning,
tracking feed-back and the associated Maple package are presented in
sec.~\ref{sec:Maple-pack}, the implementation of generalized flatness
in section~\ref{sec:generalized}, followed by examples of flight
maneuvers with increasing difficulties in
section~\ref{sec:examples}. A last section~\ref{sec:theory},
provides preliminary elements for a theoretical interpretation.
\vfill\eject

\section{Flat systems and their singularities}\label{sec:flat}

The first definition of flatness was given in the framework of
differential algebra~\cite{Ritt50}. We prefer here to use a more
flexible definition, relying on Vinogradov's notion of
diffieties~\cite{Vinogradov1986,Zharinov1996}, that do not restrict to
algebraic systems and algebraic flat outputs. The main difference in
our approach, is that diffieties are defined by fixing a derivation,
which corresponds to flatness and not just the distribution generated
by the associated vector field, which corresponds to orbital flatness
when time scaling is allowed. See~\cite{FLMR99}.

\subsection{Definition}

\begin{definition} A diffiety $V$ is an open\footnote{Using the
    coarsest topology that makes the $i^{\rm th}$ projection map
    $\pi_{i}$ continuous, for all $i\in I$.} subset of $\R^{I}$, where
  $I$ is a denumerable set, equipped with a derivation $\delta$. All
  functions on a diffiety are $\cC^{\infty}$ and only depend on a
  \emph{finite} number of coordinates. We denote their set by
  $\cO(V)$.
\end{definition}

In the sequel, we will be concerned with diffieties associated to a
system of finitely many ordinary differential equations
\begin{equation}\label{eq:sys}
  x_{i}'=f_{i}(x,u,t),
\end{equation}
where $x=(x_{1}, \ldots,
x_{n})$ is the \textit{state vector}, $u=(u_{1}, \ldots, u_{m})$ the
\textit{controls} and $t$ is the time, implicitly satisfying
$t'=1$. To such a system, we associate
$\R\times\R^{n}\times\left(\R^{\N}\right)^{m}$, the first copy of $\R$ is for
$t$, then $\R^{n}$ for $x$ and the last term corresponds to the
controls and their derivatives. So the derivation $\delta$, that we
denote by $\dd_{t}$ is
\begin{equation}\label{eq:sys-diff}
\dd_{t}:=\partial_{t}+\sum_{i=1}^{n}f_{i}(x,u,t)\partial_{x_{i}}+\sum_{j=1}^{m}\sum_{k\in\N} u_{j}^{(k+1)}\partial_{u_{j}^{(k)}},
\end{equation}
denoting $\partial/\partial x_{i}$ by $\partial_{x_{i}}$ for
simplicity. We may obviously restrict to an open subset, according to
physical limitations.

Among such diffieties, is the \textit{trivial diffiety}, which is
$\R\times(\R^{\N})^{m}$, equipped with 
$$
\dd_{t}:=\partial_{t}+\sum_{j=1}^{m}\sum_{k\in\N} z_{j}^{(k+1)}\partial_{z_{j}^{(k)}},
$$
which is in fact the jet space $J^{\infty}(\R,\R^{m})$.
We are now able to define flatness.

\begin{definition}
A diffiety morphism $\phi:U_{\delta_{1}}\mapsto V_{\delta_{2}}$ is
such
that $\phi^{\ast}\cO(V)\subset\cO(U)$ and, for any function $g$ on $V$,
$\phi^{\ast}\delta_{2}g=\delta_{1}\phi^{\ast}g$, meaning that the mapping $g$ is
compatible with the derivations. 

  The \emph{flatness domain}, is the set all \emph{flat
    points}, \emph{i.e.} points admitting a neighborhood
  isomorphic to an open subset of the trivial diffiety.

  Let $\phi$ be such an isomorphism defined by $z_{j}:=Z_{j}(x,u,t)$, the
  functions $Z_{j}$ are called \emph{flat outputs}.

Thus, $\phi^{-1}$ is locally defined and provides a \emph{flat
parametrization}, defined by $x_{i}=X_{i}\left(z, \ldots, z^{(r)}\right)$ and
$u_{j}^{(k)}= U_{j,k}\left(z, \ldots, z^{(r+k+1)}\right)$.
\end{definition}

In many cases, the state space is not affine and can be a
sphere, a circle\dots\ as we will see soon. In such cases, different
charts need to be used to cover it. And flatness can impose to use
more charts, each associated to a suitable flat output, in order to
cover the whole flatness domain.

\subsection{Singularities of flat systems}

In the above definition, flat outputs are only defined on open
spaces. Points where flat outputs are not defined, or the inverse
mapping, are \emph{apparent singularities} for these
outputs. \emph{Flat singularities} are the points where no flat
parametrization can be defined.

The lack of a general algorithmic criterion to decide flatness makes
difficult to characterize flat singularities. In a first stage of
a collaboration in progress with Y.J.~Kaminski and J.~Lévine, we have
focused on driftless systems~\cite{KLO18} and affine
systems~\cite{KLO20} with $n-1$ controls, for which the following
necessary condition, which amounts to the controllability of the
linearized system, turns out to cover all the cases when the action of
the control functions remain independent.

The most precise expression of this criterion requires using power
series. At a given point $\eta$ of a diffiety, we associate to any
function $F$ the power series:
$\rj_{\eta}F:=\sum_{k\in\N}\rd_{t}^{k}F(\eta)t^{k}/k!$ and consider at
each point $\eta$ the differential operator
\begin{equation}
  \dd_{\eta} F:= \sum_{k=1}^{n}\rj_{\eta}(\partial_{x_{k}}f_{i})\dd  x_{k}
   +\sum_{j=1}^{m}\sum_{k\in\N}\rj_{\eta}(\partial_{u_{j}^{(k)}}f_{i})\dd u_{j}^{(k)}.
\end{equation}

\begin{theorem}\label{th:lin-crit} If a diffiety defined by a
  differential system \eqref{eq:sys} is flat at point $\eta$, then the
  $\R\lbb t\rbb[\dd_{t}]$-module defined the \emph{linearized system
    at $\eta$} $\dd_{\eta}(x_{i}'-f_{i}(x,u,t)$, that is the quotient
  $\R\lbb t\rbb[\dd_{t}]$-module
  $(\dd_{\eta}x,\dd_{\eta}u)/(\dd_{\eta}(x_{i}'-f_{i}(x,u,t)))$, is a
  free module.
\end{theorem}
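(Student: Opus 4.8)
The plan is to show that flatness is preserved under the linearization functor $\dd_\eta$, and that the linearized trivial diffiety is a free $\R\lbb t\rbb[\dd_t]$-module, so that the linearized system at a flat point, being isomorphic to it, is free as well. First I would make precise the statement that a diffiety morphism $\phi$ induces, by differentiation at a point, a morphism of the associated linearized modules: if $\phi$ is defined near $\eta$ by the functions $z_j = Z_j(x,u,t)$, then $\dd_\eta\phi$ sends $\dd z_j$ to $\sum_k \rj_\eta(\partial_{x_k}Z_j)\,\dd x_k + \sum_{j',k}\rj_\eta(\partial_{u_{j'}^{(k)}}Z_j)\,\dd u_{j'}^{(k)}$, and that this assignment is compatible with $\dd_t$ because $\phi$ is (chain rule applied to $\phi^\ast\delta_2 = \delta_1\phi^\ast$). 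The key point is that if $\phi$ is a diffiety \emph{iso}morphism — which is what flatness at $\eta$ provides, with $\phi(\eta)$ a point of the trivial diffiety — then $\dd_\eta\phi$ is an isomorphism of $\R\lbb t\rbb[\dd_t]$-modules, with inverse $\dd_{\phi(\eta)}\phi^{-1}$; one checks $\dd_{\phi(\eta)}\phi^{-1}\circ\dd_\eta\phi = \dd_\eta(\phi^{-1}\circ\phi) = \dd_\eta\Id = \Id$ and symmetrically.

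Next I would identify the two modules concretely. On the system side, the module in the statement is the quotient of the free module on the symbols $\dd x_i$, $\dd u_j^{(k)}$ by the relations $\dd_\eta(x_i' - f_i) = 0$, which allow one to eliminate every $\dd x_i'$ and hence, inductively using the $\dd_t$-action, express the module as generated over $\R\lbb t\rbb[\dd_t]$ by $\dd x_1,\dots,\dd x_n,\dd u_1,\dots,\dd u_m$ modulo the differential consequences of the $n$ relations. On the trivial-diffiety side, the linearized module at any point of $J^\infty(\R,\R^m)$ is simply the free $\R\lbb t\rbb[\dd_t]$-module on $\dd z_1,\dots,\dd z_m$: there are no relations, the $\dd z_j^{(k)}$ being identified with $\dd_t^k\,\dd z_j$ via the relations coming from $z_j^{(k)\prime} = z_j^{(k+1)}$. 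So the linearized trivial diffiety is manifestly free of rank $m$. Combining with the previous paragraph, the linearized system at the flat point $\eta$ is isomorphic to this free module, hence free, which is the conclusion.

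The main obstacle I anticipate is \emph{analytic rather than algebraic}: verifying that differentiating the morphism relations term by term genuinely lands in $\R\lbb t\rbb[\dd_t]$-linear maps between the stated modules, and that the infinite sums involved (the jets $\rj_\eta$ carry a power-series coefficient, and the sums over $k\in\N$ range over infinitely many control derivatives) are well defined. Here one uses the standing hypothesis that all functions on a diffiety depend on only finitely many coordinates, so each $Z_j$, each $f_i$, and each flat-parametrization component $X_i$, $U_{j,k}$ is a function of finitely many variables; thus each $\dd_\eta\phi(\dd z_j)$ is a \emph{finite} $\R\lbb t\rbb$-combination of the generators, and the same for the inverse. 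Once this finiteness is invoked the chain-rule bookkeeping is routine. A secondary point to handle carefully is that $\phi$ is only a \emph{local} isomorphism near $\eta$; but the linearized module is built pointwise at $\eta$, so locality is exactly enough, and no globalization is needed. I would also remark that the converse fails in general, so only the stated implication is claimed.
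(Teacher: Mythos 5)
Your proposal is correct and takes essentially the same route as the paper: the paper's proof simply linearizes the flat isomorphism at $\eta$, using $x_i=X_i(Z)$ and $u_j=U_j(Z)$ to exhibit $\dd_{\eta}Z$ as a basis of the linearized module, which is exactly the content of your isomorphism between the linearized system and the free $\R\lbb t\rbb[\dd_t]$-module on $\dd z_1,\dots,\dd z_m$. Your write-up merely makes explicit the functoriality, the well-definedness on the quotient, and the finite-dependence points that the paper leaves implicit.
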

\begin{proof} If $Z$ is
  a flat output, then $\dd_{\eta} Z$ is a basis of this module. Indeed,
  $x_{i}=X_{i}(Z)$, for $1\le i\le n$ and 
  $u_{j}=U_{j}(Z)$, for $1\le j\le n$, so that
  $\dd_{\eta}x_{i}=\dd_{\eta}X_{i}(Z)$ and
  $\dd_{\eta}u_{j}=\dd_{\eta}U_{j}(Z)$. 
\end{proof}

It seems that we are lacking a good reference for testing freeness
of a $D$-module with coefficient in a power series ring. But things
are easy when coefficients are constants.

\section{Aerodynamic models of aircrafts}\label{sec:models}

We have used the model described by Martin~\cite{MartinPHD,Martin1996}
that basically follows most textbooks. We avoid reproducing all lengthy
equations to focus on their structure. 

It is classical to model aircrafts using the following $12$ state
variables: $(x,y,z,V,\gamma,\chi,\alpha,\beta,\mu,p,q,r)$. We try to
describe briefly their rough meaning. A precise understanding is not
mandatory for what follows. First, $(x,y,z)$ correspond to the
coordinates of the gravity center of the aircraft, $V$ to its speed,
the flight path angle $\gamma$ and the azimuth angle $\chi$ are Euler
angles describing the speed vector, $\mu$ is the bank angle,
corresponding to roll. Those three Euler angles define the \textit{wind
  frame}, and the sideslip angle $\beta$ together with the angle of
attack $\alpha$ describe respectively the rotations with respect to
the $z$-axis (yaw) and then $y$-axis (pitch) in order to go from the
wind referential to the \textit{aircraft frame}, according to the
following figure.
\begin{figure}[h]
  \centering
  \includegraphics[width=5cm]{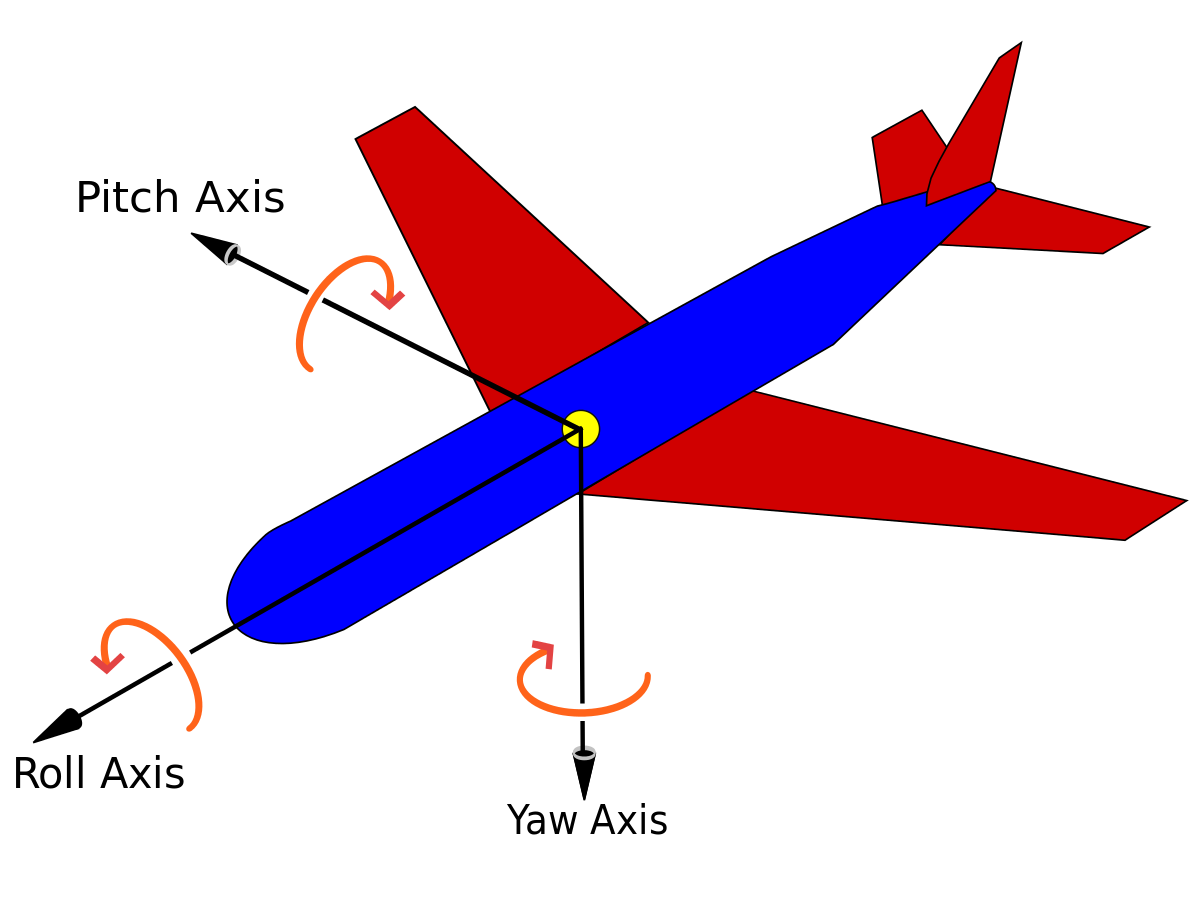}
  \hbox to 0pt{\tiny Thanks to Wikipedia\hss}\\
  \vbox{Angle $\mu$ corresponds to roll, $\beta$ to yaw and
    $\alpha$ to pitch.}
    \caption{Aircraft rotation axes}
  \Description{Angle $\mu$ corresponds to roll, $\beta$ to yaw and
    $\alpha$ to pitch. }
\end{figure}

Then, $(p,q,r)$ is the expression of the
rotation vector in the Galilean referential tangent to the aircraft
referential at each time.

The controls are the following, the thrust of both engines
$(F_{1},F_{2})$, that we prefer to model using their sum
$F=F_{1}+F_{2}$ and a parameter $\eta:=(F_{1}-F_{2})/(F_{1}+F_{2})$,
and then the virtual angles $\delta_{\ell}$, $\delta_{m}$ and $\delta_{n}$, that
respectively express the positions of the ailerons, elevators and rudder. 
When the rudder is damaged, it is possible to some extent to use
differential thrust $\eta$ as a control instead of $\delta_{n}$ (see,
\textit{e.g.} \cite{Lu-Turgolu-2018}). 

\subsection{The shape of the equations}\label{subsec:shape-eq}

We can now describe the shape of the equations, dividing the state
variables in $4$ subsets: $\Xi_{1}:=\{x,y,z\}$, $\Xi_{2}:=\{V,\gamma,\chi\}$,
  $\Xi_{3}:=\{\alpha,\beta,\mu\}$ and $\Xi_{4}:=\{p,q,r\}$. We have:
\begin{subequations}
\begin{align}  \label{eq:shape1}
      (x',y',z') &= G_{1}(V,\gamma,\chi);\\
      \label{eq:shape2}
      (V',\gamma',\chi')&=G_{2}(V,\gamma,\alpha,\beta,\mu,F,[p,q,r,\delta_{\ell},\delta_{l},\delta_{n}]);\\
      \label{eq:shape3}
      (\alpha',\beta',\mu')&=G_{3}(V,\gamma,\alpha,\beta,\mu,p,q,r);\\
      \label{eq:shape4}
      (p',q',r')&=G_{4}(V,\gamma,\alpha,\beta,\mu,p,q,r,\delta_{\ell},\delta_{l},\delta_{n}).
\end{align}
\end{subequations}
The equation \eqref{eq:shape2} actually depends on
$p,q,r,\delta_{\ell},\delta_{l},\delta_{n}$, but this dependence is
often neglected. With this simplification, setting
$\Xi_{5}:=\{\delta_{\ell},\delta_{l},\delta_{n}\}$, at stage $i$, we
can generically express the value of $\Xi_{i+1}$, using the
derivatives $\Xi_{i}'$. At stage $2$, \textit{i.e.} for $i=2$, we
need to choose one variable $\zeta$ in the set
$\Xi_{3}=\{\alpha,\beta,\mu,F\}$ to form a flat output. Then, generically,
$x,y,z,\zeta$ and their derivatives allow to compute the values of the
state space and controls. The classical choice is $\zeta=\beta$.  We now
briefly investigate apparent singularities that may appear at each
level of derivation, the second one being left for further
investigations.
    
\subsubsection{Stage 1}
{\small
\begin{subequations}
\begin{align}
\label{eq::aircraft1:x}
  \frac{d}{d t} x(t)  & =  V(t) \cos{\left(\chi(t) \right)} \cos{\left(\gamma(t) \right)};\\
\label{eq::aircraft1:y}
\frac{d}{d t} y(t)  & =  V(t) \sin{\left(\chi(t) \right)} \cos{\left(\gamma(t) \right)}; \\
\label{eq::aircraft1:z}
\frac{d}{d t} z(t)  & =  - V(t) \sin{\left(\gamma(t) \right)}.
\end{align}
\end{subequations}

}

It is easily seen that the values of $V$, $\chi$ and $\gamma$, modulo
$\pi$, can be
computed, provided that $V\cos(\gamma)\neq0$, which
seems granted in most situations. The vanishing of $V$ may occur
with aircrafts equipped with vectorial thrust, which means a larger set
of controls, that we won't consider here. This means that we assume
$V>0$, so that a single value for $(\cos(\chi),\sin(\chi))$ can be
determined on the unit circle. The vanishing of
$\cos(\gamma)$ can occur with loopings etc.\ and would require the
choice of a second chart with another set of Euler angles. This issue
was not investigated here.

\subsubsection{Stage 3}

We postpone the study of stage 2, that contains the main
difficulties, to the next section.
The shape of the third level equations imposes $\cos(\beta)\neq0$. They are
linear in $(p,q,r)$, with a non vanishing determinant and so easily solved.

\subsubsection{Stage 4}
The case of variables $(p,q,r)$ is easy too.

The dynamics of the angular speed matrix $(p,q,r)$ is given by:

{\small
\begin{equation}
\label{eq::aircraft2}
\left(\begin{array}{l}\frac{d}{d t} p(t)\\
  \frac{d}{d t} q(t)\\
  \frac{d}{d t} r(t)
\end{array}\right) =
I^{-1}
\left(\begin{array}{l}
 (I_{yy}-I{zz})qr+I_{xz}pq+L\\
 (I_{zz}-I_{xx})pr+I_{xz}(r^{2}-p^{2})+M\\
  (I_{xx}-I_{yy})pq-I_{xz}rq+N
\end{array}\right),
\end{equation}

}

where $I$ is the inertia matrix of the aircraft, assumed to be symmetric
with respect to the $xz$-plane, and $(L,M,N)$ the torque, that can
obviously be computed using these equations. In general,
one expects $L$ to depend mostly of $\delta_{\ell}$, $M$ on
$\delta_{m}$, etc.\ and to be monotonous in the range of admissible
values. Using the GNA model, they are linear in those controls, with
invertible matrices.

\subsection{The GNA model}

The aircraft model equations involve the forces $(X,Y,Z)$ and the torques
$(L,M,N)$ acting on the aircraft, that are given by these formulas:

{\small
\begin{subequations}
\begin{align}
\label{eq::X}
X&=F(t)\cos(\alpha+\epsilon)\cos(\beta(t))-\frac{\rho}{2}SV(t)^{2}C_{x}- gm\sin{\left(\gamma(t) \right)};\\
\label{eq::Y}
Y&=\begin{array}[t]{l}-F(t)\cos(\alpha+\epsilon)\sin(\beta(t))+\frac{\rho}{2}SV(t)^{2}C_{y}\\
   +gm\cos(\gamma(t))\sin(\mu(t));\end{array}\\
\label{eq::Z}
Z&=-F\sin(\alpha+\epsilon)-\frac{\rho}{2}SV(t)^{2}C_{z}-gm\cos(\gamma(t))\cos(\mu(t));\\
\label{eq::L}
L&=-y_{p}\sin(\epsilon)(F_{1}(t)-F_{2}(t))+\frac{\rho}{2}SV(t)^{2}aC_{l};\\
\label{eq::M}
M&=\frac{\rho}{2}SV(t)^{2}bC_{m};\\
\label{eq::N}
N&=y_{p}\cos(\epsilon)(F_{1}(t)-F_{2}(t))+\frac{\rho}{2}SV(t)^{2}aC_{n}.
\end{align}
\end{subequations}
}

The angle $\epsilon$ is a small angle related to the lack of
parallelism of the reactors with respect to the $xy$-plane of the
aircraft and $\rho$
is the air density, $a$ and $b$ lengths related to the aircraft characteristics.

The aerodynamic coefficients $C_x,C_y,C_z,C_{l},C_{m},C_{n}$ depend on
$\alpha$ and $\beta$ and also on the angular speeds $p$, $q$, $r$
as well as the controls $\delta_{l}$, $\delta_{m}$ and
$\delta_{n}$. To make the system flat, we need to consider that $C_{x}$,
$C_{y}$ and $C_{z}$ only depend on $\alpha$ and $\beta$.
In the literature, the available expressions are often partial or
limited to linear approximations, as in McLean~\cite{McLean}.
We used here the Generic Nonlinear Aerodynamic (GNA)
subsonic models, given by Grauer
and Morelli~\cite{Grauer-Morelli2014}, that cover a wider range of values.

We will provide simulations with 2 aircrafts among the $8$ in their
database: STOL utility aircraft DHC-6 Twin Otter and
the sub-scale model of a transport aircraft GTM (see~\cite{Hueschen}).
Data for the F4 and F16C fighters are also available in our
implementation. 
The GNA model for the aerodynamics functions $C$ appearing in
formulas~(\ref{eq::X}--\ref{eq::N}) depends on 45 aerodynamic
coefficients, in formulas such as:
\begin{equation}\label{eq::GNA}
{\scriptstyle \begin{array}{ll}
  C_{D} &=\begin{array}[t]{l}\theta_{1}+\theta_{2}\alpha+\theta_{3}\alpha\tilde q
    +\theta_{4}\alpha\delta_{m}+\theta_{5}\alpha^{2}
    +\theta_{6}\alpha^{2}\tilde q+\theta_{7}\delta_{m}
    +\theta_{8}\alpha^{3}\\+\theta_{9}\alpha^{3}\tilde q
    +\theta_{10}\alpha^{4},\end{array}\\
  C_{y} &=\theta_{11}\beta+\theta_{12}\tilde p
    +\theta_{13}\tilde r +\theta_{14}\delta_{l}+\theta_{15}\delta_{n},\\
  C_{L} &=\theta_{16}+\theta_{17}\alpha+\theta_{18}\tilde q
    +\theta_{19}\delta_{n}+\theta_{20}\alpha\tilde q+\theta_{21}\alpha^{2}
    +\theta_{22}\alpha^{3}+\theta_{23}\alpha^{4},
\end{array}}
\end{equation}
where $\tilde p=ap$, $\tilde r=ar$, $\tilde q=bq$, $a$ and $b$ being
constants related to the aircraft geometry, $C_{D}$ and $C_{L}$
correspond to the lift and drag coefficients. The coefficients $C_{x}$
and $C_{z}$ in the wind frame are then given by the formulas:
\begin{equation}\label{eq::LDtoXY}
\begin{array}{lll}
C_{x}&=&\cos(\alpha)C_{D}+\sin(\alpha)C_{L},\\
C_{z}&=&\cos(\alpha)C_{L}-\sin(\alpha)C_{D}.
\end{array}
\end{equation}

Grauer and Morelli also provide all the needed physical constants, but
no precise data for landing conditions, flaps\dots\ To simulate
landing, empirical changes were made. The starting point of this work
was to be able to handle the full model, considering changes of flat
outputs when singularities are met, and to question the validity of
the motion planning provided by a flat simplified model, when trying
to control the full model.

\section{Flat outputs and their singularities}\label{sec:flat-outputs}

We now investigate the singularities related to the various choices of
flat output, at stage two.

\subsection{Classical flat outputs}\label{subsec:classical-FO}

Martin~\cite{MartinPHD} has used the set of flat outputs:
$x,y,z,\beta$.  We need to explicit under which condition such a flat
output is non singular. The differential equations involved at
stage two are the following.  {\small
\begin{subequations}
  \begin{align}
    \label{eq::aircraft1:V}
\frac{d}{d t} V(t) & =  \frac{X}{m} ;\\
\label{eq::aircraft1:gamma}
\frac{d}{d t} \gamma(t)& = 
                    -\frac{Y\sin(\mu(t))+Z\cos(\mu(t))}{mV(t)};\\
\label{eq::aircraft1:chi}
\frac{d}{d t} \chi(t)& =
              \frac{Y\cos(\mu(t))-Z\sin(\mu(t))}{\cos(\gamma(t))mV(t)}.
\end{align}
\end{subequations}

}
The first one~\eqref{eq::aircraft1:V} provides the value of $X$. From its
expression,  we can express the value of $F$ by~\eqref{eq::X}, as
$\alpha+\epsilon$ is assumed to be small. We see that the two last
equations depend on
$\cos(\mu)Y-\sin(\mu)Z$ and $\sin(\mu)Y
+\cos(\mu)Z$. We get new expressions $\hat Y$ and $\hat Z$ by
substituting in them the value of $F$ provided
by~\eqref{eq::X}. 
We can compute locally $\alpha$ and $\mu$, provided that
\begin{equation}\label{eq::cond_beta}
\left|\begin{array}{cc}\frac{\partial\hat X}{\partial\alpha}&
\frac{\partial\hat X}{\partial\mu}\\
\frac{\partial\hat Y}{\partial\alpha}&
\frac{\partial\hat Y}{\partial\mu}\end{array}\right|\neq0.
\end{equation}
This condition implies that $Y$ and $Z$ do not both vanish, which
excludes $0$-g flight for space training or some aerobatics maneuvers,
but which stands in most usual flight conditions. The main interest of
this choice is to be able to impose $\beta=0$, which is almost always
required.

\subsection{The bank angle choice}\label{subsec:Bank-FO}

Considering the flat
output $\{x,y,z,\mu\}$, we see that we can compute the values of
$X$, $Y$ and $Z$. Again, $X$ provides an expression of $F$, that may
be susbsituted in $Y$ and $Z$ to get new expressions $\tilde Y$ and
$\tilde Z$. The flat output is regular when
\begin{equation}\label{eq::stalling_cond}
\left|\begin{array}{cc}\frac{\partial\tilde Z}{\partial\alpha}&
\frac{\partial\tilde Z}{\partial\beta}\\
\frac{\partial\tilde Y}{\partial\alpha}&
\frac{\partial\tilde Y}{\partial\beta}\end{array}\right|\neq0.
\end{equation}
The vanishing of this determinant may be interpreted as some kind of
stalling condition. Indeed, when $\beta=0$, it is equal by symmetry to
$\partial\tilde Z/\partial\alpha \partial\tilde Y/\partial\beta$. For
most aircrafts, $\partial\tilde Y/\partial\beta\neq0$ seems reasonable,
although it may be very small or even negative for some fighter like
the F16XL with a delta wing, according to data in
\cite{Grauer-Morelli2014}. Then, $\partial\tilde Z/\partial\alpha$
means that the lift is extremal, which may be taken as a rough
mathematical definition of stalling. Of course, we are here working
with a simplified model that cannot reflect the irreversible changes
in air flow that occurs in real stalling, but only mimics it as a maximum
of the lift. In such a situation, the control $\delta_{m}$ that acts
on $\alpha$, and so on the lift, may be considered as lost.  And indeed,
for a straight line trajectory with constant speed equal to the
stalling speed, \textit{i.e.} with $\alpha$ maximal, the aircraft
model is not flat according to th.~\ref{th:lin-crit}. This means
that such a flight output always works for most aircrafts, except in
situations that obviously need to be avoided for safety reasons.

\subsection{The thrust choice}\label{subsec:Thust-FO}

The choice of thrust $F$ has one main interest: to set $F=0$ and
consider the case of an aircraft having lost all its engines. See
subsection~\ref{subsec:forward_slip}. In the case of the GNA model,
$C_{y}$ is linear in $\beta$. If $\cos(\mu)\theta_{11}\neq0$ (see
\eqref{eq::GNA}), we may express $\beta$ depending on $\alpha$, $\mu$,
$X_{1}$, $X_{2}$ and the aircraft parameters, using equation
\eqref{eq::aircraft1:chi}, and then replace it by this evaluation in
$X$ and $Z$ to get new expressions $\bar X$ and $\bar Y$. The flat
outputs including $F$ are non singular iff
\begin{equation}\label{eq::cond_F}
\left|\begin{array}{cc}\frac{\partial\bar X}{\partial\alpha}&
\frac{\partial\bar X}{\partial\mu}\\
\frac{\partial\bar Z}{\partial\alpha}&
\frac{\partial\bar Z}{\partial\mu}\end{array}\right|\neq0.
\end{equation}
By symmetry, both $\partial\bar X/\partial\mu$ and $\partial\bar
Z/\partial\mu$ vanish when $\beta=\mu=0$, so that this choice of
linearizing outputs requires non zero side-slip angle and bank angle
for trajectories included in a vertical plane.

\subsection{Other sets of flat outputs}\label{subsec:Other-FO}

Among the other possible choices for completing the
set $\Xi_{1}$ in order to get flat outputs, $\alpha$ could work in
theory but does not
seem to have much specific interest. One may also consider time varying
expressions, e.g. linear combinations of $\beta$ and $\mu$, to
smoothly go from one choice to another, which has been implemented
but did not lead to a convincing use in simulations.

\section{Maple package}\label{sec:Maple-pack}
We describe here an experimental implementation, only designed at this
stage for our own use and lacking of documentation and
comments. However, the source code is made available for curious
readers:\hfill\break
\url{http://www.lix.polytechnique.fr/{\textasciitilde}ollivier/GFLAT/}. The
goal was to get reliable results by minimizing the needed total amount
of time, that is the time requested by numerous simulations and the
time of implementation. 

Four Maple packages were written. The package \texttt{GNA} implements
data from Grauer and Morelli, the package \texttt{Flat\_Plane\_G2}
implements the flat motion planning and its generalization. A package
\texttt{Newton} contains a multivariable Newton method and a package
\texttt{Display\_plane} deals with numerical simulations and drawing
the curves that illustrate this paper.

Another important point is to be able to control long computations in
order to stop them if something goes wrong. The functions were mostly
used in verbose mode, displaying the index $i$ of each new time step
or intermediate numerical results during motion planning or numerical
integration.

This proved important for debugging but also during the repeated trial and
error sequences required to guess working parameters for the feed-back.

The general spirit was to limit ourselves to basic Maple functions:
manipulation of lists, substitutions, computation with polynomials and
classical functions, power series, the solve function for linear
systems, and the \texttt{dsolve} numerical integrator.

\subsection{Physical models. GNA}

There is not much to say about this package. Our choice was to use
global variables to store all the requested parameters. It has many
drawbacks, including some possible protest from Maple numerical
integrator, that we were able to overcome. The main advantage is to
alleviate the number of arguments in functions that already
require a great number of them and to make all the requested
intermediate results available for the function used at next step
without mistakes and omissions. There is a function for each model of
aircraft that store the physical constants, with names such as
\texttt{TO}, \texttt{GTM} or \texttt{F16C}. Its arguments are of the form
\texttt{x = fct(t), y = fct(t)}\dots a sequence that is stored in a global
list to provide the time functions associated to the flat outputs.

We have already said that any combination $\zeta$ of $\beta$ and $\mu$ can be
used as a flat output. For this, the syntax
$$
\texttt{zeta=(f1(t)*betta+f2(t)*mu=fct(t))}
$$ is recognized. One may notice that \texttt{beta} and \texttt{gamma}
are already used by Maple. An ugly but fast solution was to write
\texttt{bbeta} and \texttt{gama} to avoid conflicts. In case of rudder
failure, one can use relative thrust as control. A generic name for
this control is \texttt{u\_4} and one may write \textit{e.g.}
\texttt{deltan=u\_4, eta=0} or \texttt{deltan=10*deg, eta=u\_4}. If a
non zero value is given to $\delta_{n}$, it will be used at the
stage~2 (see~\ref{subsec:shape-eq}), for better precision, instead of
setting it to $0$ in order to define the simplified model. Options
provide models for ground effect or an expression of air density,
depending on altitude. For this, the notation \texttt{\_z} is used
instead of \texttt{z} to avoid too early evaluation. One may also
assign to \texttt{eta} a function of the time, e.g. to model an engine
failure. We need then to denote the time by \texttt{\_t}, again to
prevent too early evaluation.

\subsection{Newton operator with series}

The main task of the Flat-plane model is to achieve motion
planning. Following the ideas developed in
section~\ref{sec:flat-outputs}, this is in principle easy. We
encounter two difficulties. First, computing successive derivatives of
the flat outputs may lead to formulas of great size and slow
computations, mostly when trying to model complicated maneuvers and
long flight sequences. Second, we cannot rely at stage 2
(see~\ref{subsec:shape-eq}) on closed form formulas for solving the
equations, so that numerical approximations need to be computed.

Our choice was to compute at a given time a power series expansion of
the flat outputs $(x,y,z)$ with all terms up to $t^{5}$. At stage $2$,
a classical Newton method is used to compute constant terms of the
series corresponding to $\alpha$, $\beta$, $\mu$ or $F$. Then, we use
a Newton method for series (see \textit{e.g.} \cite[th.~3.12
  p.~70]{aecf-2017-livre}) to compute their power series expansion
modulo $t^{2}$ and then $t^{4}$, which is enough to get
$\delta_{\ell}$, $\delta_{m}$ and $\delta_{n}$ as affine functions of
$t$. Higher orders may also be computed and will be needed in
sec.~\ref{sec:generalized}.

Unless physical considerations makes it difficult or impossible
(\textit{e.g.} near stalling conditions), the use of Newton method is
in general easy, when initiated with $0$ values, as most angles are
small. This is no longer the case with flat outputs $x$, $y$, $z$ and
$F$, that require higher values of $\beta$. Then, some calibration
functions (see~\ref{subsec:calibration}) are used to provide suitable
values to initiate the computation. Our Newton function is a memory
one, so that it starts at step $i+1$ with the values of step $i$ for
better efficiency. During experiments, warning messages from the
Newton function that fails to provide solution up to $10^{-3}$ after
$20$ iterations are the symptoms of a choice of trajectory that is too
close to a singularity of the flat output.

\subsection{Motion planning}
The function \texttt{Motion\_Planning} takes among its arguments a
beginning time, an ending time and the number of time intervals.  Its
many outputs are not returned as outputs but stored in global
variables.  The most important is the table \texttt{TTG}.  At each
step time $t_{i}$, the power series expansions $s_{i}$ of the controls
and state variables are stored; \textit{e.g.} for $\alpha$ in
\texttt{TTG$[\alpha,i,0,0]$}. So, they can be used by functions with
names such as \texttt{falpha}, \dots\ that will
compute the value of $\alpha$ at $t_{1}\le t\le t_{i+1}$ using the formula:
$[(t_{i+1}-t)s_{i}(t-t_{i})+(t-t_{i})s_{i+1}(t-t_{i+1})]/(t_{i+1}-t_{i})$
for better precision.

An option calls Maple numerical solver to build numerical integrators
for the full model (stored in \texttt{resudsolve}), using just the
control functions computed with the simplified model, or completing
them with feed-back functions (stored in \texttt{resudsolveB}), that
are described in the next subsection. Then, the function
\texttt{bouclage} that computes the feedback is also called.

An extensive use of the \texttt{subs} Maple function allows to
perform rewriting tasks, replacing in the equations parameters by
their values, as well as already computed state variables.
A basic function \texttt{serpol} (and avatars that apply to both terms
of an equality, list of equalities etc.) computes a power series expansion and
convert it to a polynomial, that is easier to handle for further
computations.

\subsection{Design of the feed-back}\label{subsec:feed-back}

The function \texttt{bouclage} that computes the feedback takes a
single argument that is the fourth linearizing output: $\beta$, $\mu$
or $F$. It return no values, the computed results being stored in
global variables.

To design the feed-back, we consider the linearized system around the
trajectory planned using $\dd x$, $\dd y$, $\dd z$ as flat outputs of
this linear system, completed with $\rd \beta$ or $\rd \mu$, according
to the case, or nothing with the $F$ output. The state functions are
replaced at each step $i$ by its power series expansion at
$t_{i}$. The main idea is to achieve an exponential decrease of
$\updelta x$, $\updelta y$, \dots\ that is the difference between
values $x$, $y$, \dots\ computed by numerical integration using the
full model and the planned values $\tilde x$, $\tilde y$, \dots\ using
the flat parametrization. To be able to correct model errors, we also
need to use the integrals
\begin{equation}
\begin{array}{ll}
I_{1}&=\int_{t_{0}}^{t}\cos(\chi(\tau))\updelta x(\tau)+\sin(\chi(\tau))\updelta y(\tau)\dd\tau;\\
I_{2}&=\int_{t_{0}}^{t}-\sin(\chi(\tau))\updelta
x(\tau)+\cos(\chi(\tau))\updelta y(\tau)\dd \tau;\\
I_{3}&=\int_{t_{0}}^{t}\updelta z(\tau)\dd \tau;\\
I_{4}&=\int_{t_{0}}^{t}\updelta \zeta(\tau)\dd \tau,
\end{array}
\end{equation}
where $t_{0}$ is the initial time of the simulation and $\zeta$ is
$\beta$ or $\mu$ according to our choice of flat outputs.

The algebraic design of the feed-back relies on computations in the
differential module defined by the linearized system at each step time
$t_{i}$, using the analogy between the assumed ``small variations''
$\updelta \xi= \xi-\tilde \xi$ and $\dd \xi$ for any state variable
$\xi$. Each equation $P$ of the system is replaced by its differential
$\sum_{\xi} \partial P/\partial \xi \dd \xi$ and one substitutes to
the $\xi$'s their power series estimation $\tilde \xi$.

Lists of positive real values $\lambda_{i,j}$ having been given, the
feed-back $\updelta
F=c_{1,I_{1}}+\sum_{\xi\in\Xi_{1}\cup\Xi_{2}\cup\Xi_{3}}c_{1,\xi}\updelta_{\xi}$
is set so that $\prod_{k=1}^{3}(\dd/\dd t - \lambda_{1,k})I_{1}$ is
equal to $0$. In the same way, the feed-backs $\updelta
\delta_{\ell}=\sum_{\xi\in\hat\Xi}c_{2,\xi}\updelta_{\xi}$, $\updelta
\delta_{m}=\sum_{\xi\in\hat\Xi}c_{3,\xi}\updelta_{\xi}$ and $\updelta
u_{4}=\sum_{\xi\in\hat\Xi}c_{4,\xi}\updelta_{\xi}$, where
$\hat\Xi=\{I_{1}, \ldots, I_{4}\}\cup\bigcup_{p=1}^{4}\Xi_{p}$, are
computed, so that
$\prod_{k=1}^{5}(\dd/\dd t - \lambda_{2,k})I_{2}$,
$\prod_{k=1}^{5}(\dd/\dd t - \lambda_{3,k})I_{3}$ and
$\prod_{k=1}^{3}(\dd/\dd t - \lambda_{4,k})I_{4}$ are all equal to $0$.

We proceed just as for the motion planning. At each step time $t_{i}$,
an the expressions for $\updelta F$, $\updelta\delta_{\ell}$,
\dots\ are computed and stored in the global array \texttt{TtF}$[i]$,
\texttt{Ttdeltal}$[i]$, \dots\ so that these results can be used by
numerical functions \texttt{ftF}, \texttt{ftdeltal}, \dots\ that achieve fast
numerical computation of the feed-back during the integration.

Under good hypotheses, the $I_{p}$, $1\le p\le 4$ tend to a constant
value, or a slowly varying value, so that their derivatives are $0$,
or small, just as the $\updelta x$, $\updelta y$, $\updelta z$ and
$\updelta \zeta$.  Troubles appear with fast maneuvers and also with
aircrafts like the Twin Otter with generous controls surfaces,
generating greater thrusts. Too big values for the $\lambda_{i,j}$ can
create instabilities, two small values do not manage to keep close to
the planned trajectory. Choices where made with trial and errors, that
sometimes required many interrupted simulations.
\medskip

The choice of $F$ as a flat output just requires minor changes. We
only need to use $I_{1}$, $I_{2}$ and $I_{3}$ and compute the
feed-backs $\updelta \delta_{\ell}$ \dots\ so that
$\prod_{k=1}^{5}(\dd/\dd t - \lambda_{p,k})I_{p}$, for $1\le p\le 3$.

\section{Generalized flatness}\label{sec:generalized}

\subsection{Calibration functions}\label{subsec:calibration}

When the torsion and the curvature of the trajectory are constants,
the values of the controls $F$, $\delta_{l}$, $\delta_{m}$ and
$\delta_{m}$ are constant too. It is then possible to compute them,
just knowing $V$, $\gamma$, $\chi'$ and $\beta$, even for the full
model. They are solutions of a non-linear system, that may be solved
using Newton method. Indeed, looking at the set of
equations~\eqref{eq:shape3}, \eqref{eq:shape4} and the
equations~\eqref{eq::aircraft1:V} and \eqref{eq::aircraft1:gamma}, we
see that for such trajectories, the derivatives in the left members
are equal to $0$. On may add equation \eqref{eq::aircraft1:chi}, for
which the left member $\chi'$ is a constant. We have then $9$
equations between the $13$ unknowns in
$\{V,\gamma,\chi',F\}\cup\Xi_{3}\cup\Xi_{4}\cup\Xi_{5}$. Generically, we
need to fix $4$ values to have local expressions of the $13$
others. We have implemented such functions to compute the angle of
attack $\alpha$, depending of $V$, or to compute stalling speed. They
most of the time only depend of $2$ arguments, instead of $4$, when assuming
$\gamma=\chi'=0$, or just one, when assuming also $\beta=0$.

\subsection{From calibration to time varying controls}

When the control functions are not constant, it remains possible to
evaluate their values with the full system.  The basic idea is to
recompute the trajectory planned with the simplified system, using the
values obtained for $p,q,r,\delta_{\ell},\delta_{m},\delta_{n}$,
instead of $0$. The process can then be iterated, and we can describe
it in the general setting of an almost chained system, such as
\begin{equation}\label{eq:chained-sys}
  \begin{array}{ll}
(Z_{h}',X_{h}') = &G_{h}(Z_{1}, \ldots Z_{h+1}, X_{1}, \ldots,
    X_{h+1})\\
    &+H_{h}(X_{h+2}, \dots, X_{h+\ell_{h}}), 1\le h\le r,
  \end{array}
\end{equation}
with the $\ell_{h}\ge1$, $1\le h\le r$. By convention, $\ell_{h}=1$
means that $H_{h}=0$. The $X_{h}$ form a partition of X, the $Z_{h}$ a
partition of $Z$ and $X\cup Z$ is the set of both state variables and
controls, the distinction being more physical than mathematical.  We
assume that $\sharp X_{h}+\sharp Z_{h}=\sharp X_{h+1}$, $\sharp Z=m$,
the number of controls and $\sharp X_{1}=0$, where $\sharp X_{p}$
denotes the cardinal of $X_{p}$.

If one neglects $H$, or replace in $H$ its arguments by any known
value $\hat X$, the variables in $Z$ are assumed to be flat outputs
for the system. This assumption means that setting
$Z_{h,i}=\zeta_{h,i}(t)$, one can at time $t_{0}$ replace $Z_{h,i}$ in
the equations~\eqref{eq:chained-sys} by a power series development of
$\zeta_{h,i}$ at order $\kappa-h+1$ and compute power series solutions
$\tilde X_{h}$ at order $\kappa-h+1$.  This is assumed to be implemented
in a function \texttt{FlatParametrization}$(t_{0},\kappa,\zeta, \hat
X)$. Using any guessed value $\hat X^{[-1]}$, with $X_{h}^{[-1]}$
known at order $\kappa-h+\ell_{h}$, we can compute an approximation
of the state and control
$$
\hat X^{[0]}:=\hbox{\texttt{FlatParametrization}}(t_{0},\kappa_{0},\zeta,
\hat X^{[-1]}),
$$ where each set $\hat X_{h}$ is computed at order $\kappa_{0}-h+1$.

This may be iterated $J$ times, using $X^{[0]}$, $X^{[1]}$,
\dots\ instead of the guessed value $X^{[-1]}$, as described by the
following process, where the input $v$ denotes the guessed initial
value, $\zeta$ any vector of $m$ functions, $J$ a non-negative integer
and $e$ the wanted order for the output. The order of the output
decreases of $L:=\max_{h=1}^{r}\ell_{h}-1$ at each iteration.
\smallskip

{

\noindent \texttt{GeneralizedFlatParametrization}($v$, $\zeta$,  $J$, $e$)

\noindent $\hat X^{[-1]}:=v\>\hbox{\it (Guessed values)}$;

\noindent $L:=\max_{h=1}^{r}\ell_{h}-1$;

\noindent $\kappa_{0}:=e+r+JL$;

\noindent \textbf{for} j \textbf{from} $0$ \textbf{to} $J$ \textbf{do}

$\hat X^{[j]}:=
\hbox{\texttt{FlatParametrization}}(t_{0},\kappa_{j},\zeta,
\hat X^{[j-1]})$,

$\kappa_{i+1}:=\kappa_{i}-L$;

\noindent \textbf{od};

\noindent \texttt{return} $\hat X^{[J]}$;
\smallskip

}

Returning to the plane model, we have $\sharp X_{1}=0$, $\sharp
X_{2}=\sharp X_{3}=3$ and $\sharp X_{4}=\sharp X_{5}=4$, adding
$F^{(p-3)}$ to $\Xi_{p}$, for $p=4,5$, for consistency with
\eqref{eq:chained-sys}. Furthermore, we have $Z_{1}=\{x,y,z\}$ and
$Z_{3}=\{\xi\}\in\{\alpha,\beta,\mu,F\}$, with
$X_{3}=\{\alpha,\beta,\mu, F\}\setminus\{\xi\}$.

  The only term $H$ is $H_{2}$, that depends of the state variables
  $p$, $q$, $r$ in $X_{4}$ and the controls $\delta_{l}$, $\delta_{m}$
  and $\delta_{n}$ in $X_{5}$. So, $L=2$ in
  our case. This means that with $J$
iterations, we need to start computations with series of order $5+2J$
in oder to get the controls $\delta$ in $X_{5}$ at order $1$.

All the unavoidable accessory tinkerings in the real implementation
would be tedious to detail, but basically, implementing generalized
flat parametrization is an easy task, as we just have to increase the
orders of a known integer and to implement a loop that iterates the
core of the \texttt{Motion\_Planning} function. At iteration $j$, the
series corresponding, \textit{e.g.}, to $\alpha$ is stored in
\texttt{TTG$[\alpha,i,j,0]$}. 

We do not investigate more deeply here the question of the convergence of this
process, beyond the fact that the $H_{h}$ are assumed to be ``small''
and that a limited number of iterations provide good results in the
following examples, all computed with $J=4$.

\section{Examples}\label{sec:examples}

Designing a trajectory that matches actual practice and aircrafts
possibilities by looking at flight instructions books and pilots forums
sure helps. We did not try to use tricks to reduce computation time in
order to get better precision.

\subsection{Single engine}\label{subsec:single_engine}

We model here a Twin Otter that loses an engine, whose power gradually
decreases. We go from equal thrust to total extinction of starboard
engine, setting the value of $\eta=(F_{1}-F_{2})/(F_{1}+F_{2})$, as in
equation~\eqref{eq::TO_single_engine} below. The distance of the
engines to the plane of symmetry of the aircraft has been evaluated to
$9.2$ft.

The rudder must compensate the torque created by a
dissymmetric thrust. With the full model, the rudder also creates a
thrust, that must be compensated by a variation of $\beta$ or $\mu$.
With $\beta=0$ or $\mu=0$, the trajectory planned by the
simplified model is the same. Using here the feed-back for $\beta$,
$\mu$ will change.

{ 
\begin{equation}\label{eq::TO_single_engine}
\begin{array}{ll}
  x&=140{\kts}t;\quad y=0;\quad
  z=0;\quad \mu=0;\\
  \eta&= .5+\frac{\arctan\frac{t-30.}{5.}}{\pi}
\end{array}
\end{equation}
}
The Twin Otter has generous control surfaces, making it highly
manoeuvrable, but meaning a higher contribution of the $\delta_{l}$,
$\delta_{m}$, $\delta_{n}$ to $C_{x}$, $C_{y}$ and $C_{z}$. We borrow
with some adaptations the values of the $\lambda_{i,j}$ suggested by
Martin~\cite{MartinPHD}:
$\lambda_{1,1}=1.$, $\lambda_{1,2}=2.$, $\lambda_{1,3}=
3.$, $\lambda_{2,1}= 1.$, $\lambda_{2,2}=  1.$, $\lambda_{2,3}=
1.$, $\lambda_{2,4}=  2.$, $\lambda_{2,5}=  3.$,
$\lambda_{3,1}=  1.5$, $\lambda_{3,2}=  1.5$, $\lambda_{3,3}=  1.5$,
$\lambda_{3,4}=  3.$, $\lambda_{4,5}=  4.$, $\lambda_{4,1}=  1.$,
$\lambda_{4,2}=  2.$, $\lambda_{4,3}=  3$.

The variations of $\mu$ remains little,
in accordance with the reported ability of the T-O to fly with a
single engine (Lecarme~\cite{Lecarme1966}).

\begin{figure}[ht]\label{fig::GTM_se}
  \caption{Twin Otter loosing one engine, with $\beta=0$.}
  \hbox to \hsize{\includegraphics[width=6cm]{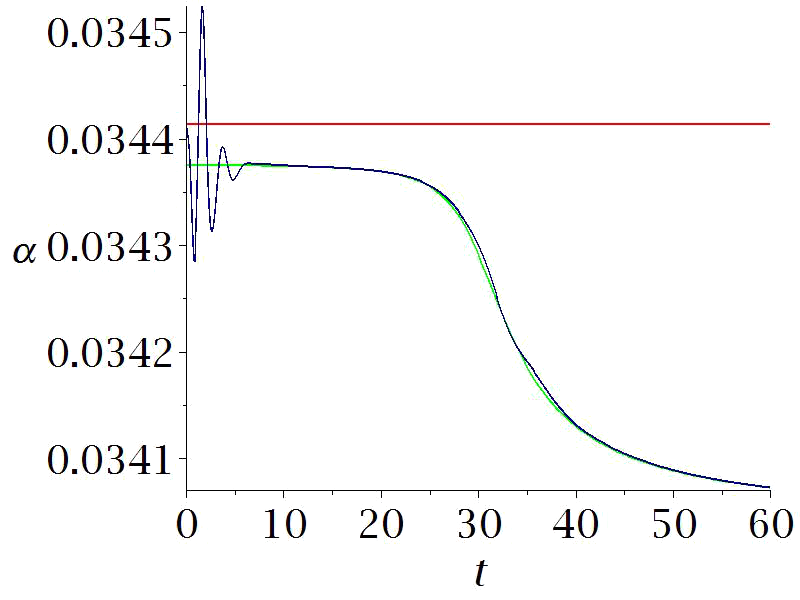}\hfill
    \includegraphics[width=6cm]{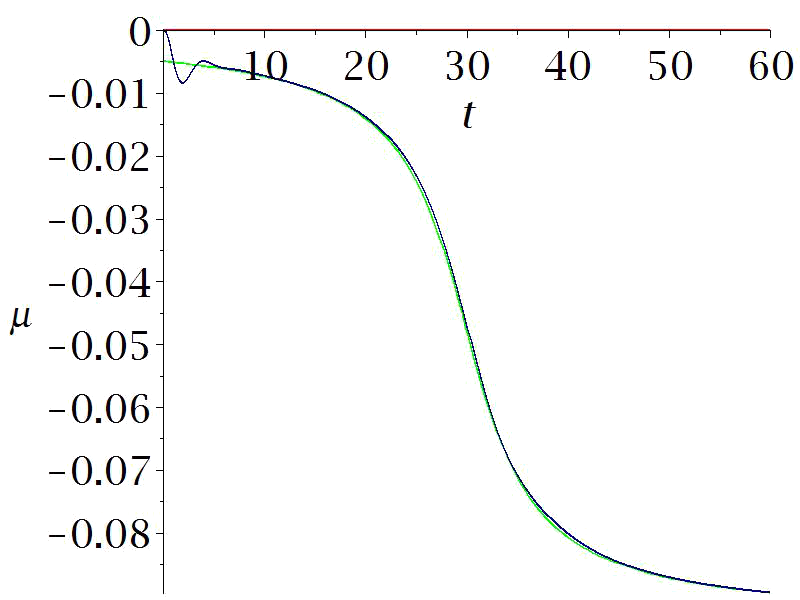}}
  
\vbox{\small The flatness planned curve is in red, the integration
  with feed-back in darkblue and the generalized flatness curve in
  green.}
\Description{The planned curve is in red, the integration
  with feed-back in darkblue and the generalized flatness curve in
  green.}
\end{figure}
We see that the integrated curves converge to the curves planned by
generalized flatness, after initial oscillations, which already shows
that this prediction is meaningful. The total computation time for the
flat and generalized parametrization is $1279$sec. The numerical simulation
takes $76$sec.

\subsection{Forward slip}\label{subsec:forward_slip}

This maneuver may be used for emergency landing, when an aircraft that
has lost all engines comes near the landing strip too high or too
fast. A way to decrease speed and altitude is to increase $\beta$ and
$\mu$ in opposite ways, creating deceleration when aerobrakes are
unusable. It is in general used for small aircrafts, but there is a
successful example of an emergency landing with an airliner, at the
former air force basis of Gimli, Manitoba, in 1983~\cite{Gimli}. Here
we used a calibration function to guess initial values and non zero
values for the controls, close to the mean speed and flight path angle
of our trajectory.

The following table shows constant values for straight line
trajectories, depending on $\alpha$ and $\beta$, for both the real and
the simplified models with
$(p,q,r,\delta_{l},\delta_{m},\delta_{n})=(0,0,0,0,0,0)$.
{\footnotesize
$$
\begin{array}{|l|l|l|l|l|l|l|l|l|}
\hline
\hbox{Model}&\alpha&\beta&\gamma&\mu&V&\delta_{l}&\delta_{m}&\delta_{n}\\
\hline
\hbox{Simple}&0.15&0.&-0.1187&0.&29.8996&0.&0.&0.\\
\hline
\hbox{Real}&0.15&0.&-0.1190&0.&30.3053&0.&-0.0490&0.\\
\hline
\hbox{Simple}&0.15&0.2&-0.1650&0.2409&29.3672&0.&0.&0.\\
\hline
\hbox{Real}&0.15&0.2&-0.1470&0.1345&30.1114&-0.1880&-0.0490&0.3305\\
\hline
\hbox{Simple}&0.15&0.35&-0.2508&0.3899&28.4019&0.&0.&0.\\
\hline
\hbox{Real}&0.15&0.35&-0.2027&.2250&29.7171&-0.3316&-0.0490&0.5690\\
\hline
\end{array}
$$ } For our simulation, we have chosen $\alpha=0.15$ and $\beta=0.35$
as reference values to set the controls. To fix ideas, the speed
values for such a $0.055$ scale model must be divided by $0.055^{0.5}$
to get full scale values, which means $456.1709$km/h for the total
speed.  Here are the flat output trajectories and feed-back
parameters.

\begin{equation}\label{eq::GTM_FS}
\begin{array}{ll}
  x&=29.10852587t+50\sin(t/60.);\\
  y&=60\cos(t/100.+2.);\\
  z&=-1000+5.983293200t+70\sin(t/70.));\\
\lambda_{i,j}&=0.5
\end{array}
\end{equation}

\begin{figure}[ht]\label{fig::GTM_FS}\caption{Forward slip with the GTM}
  \hbox to \hsize{
  \includegraphics[width=6cm]{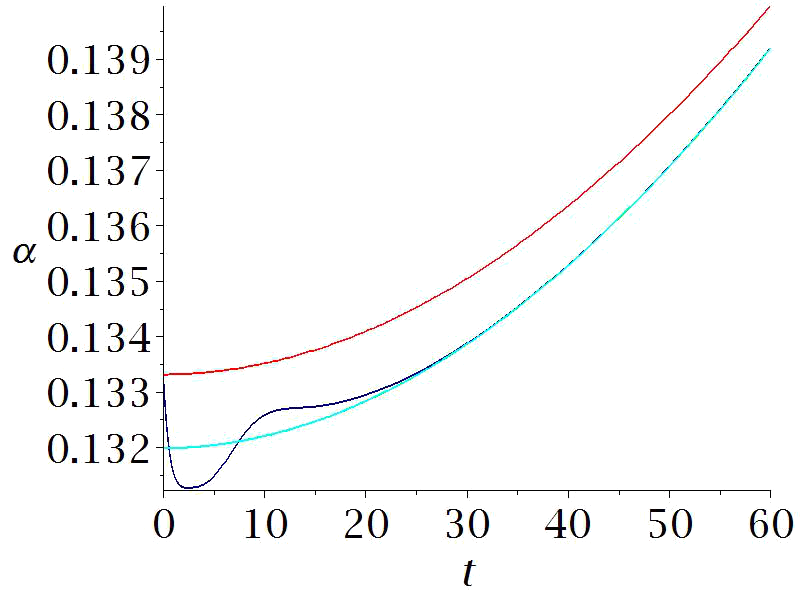}\hfill
  \includegraphics[width=6cm]{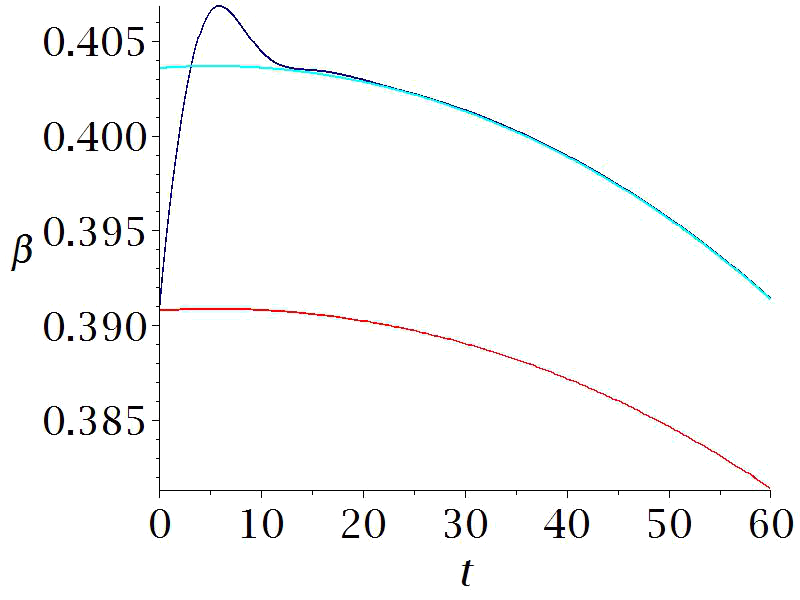}}
  \hbox to \hsize{
  \includegraphics[width=6cm]{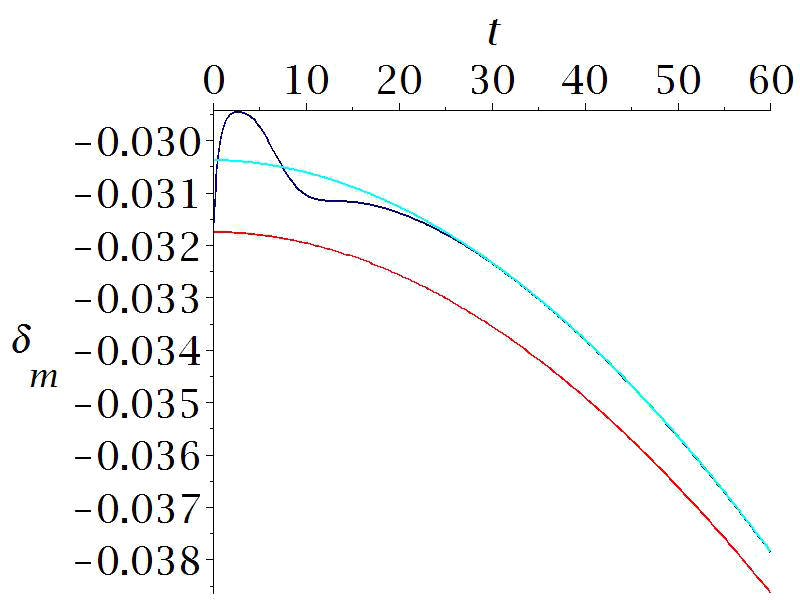}\hfill
  \includegraphics[width=6cm]{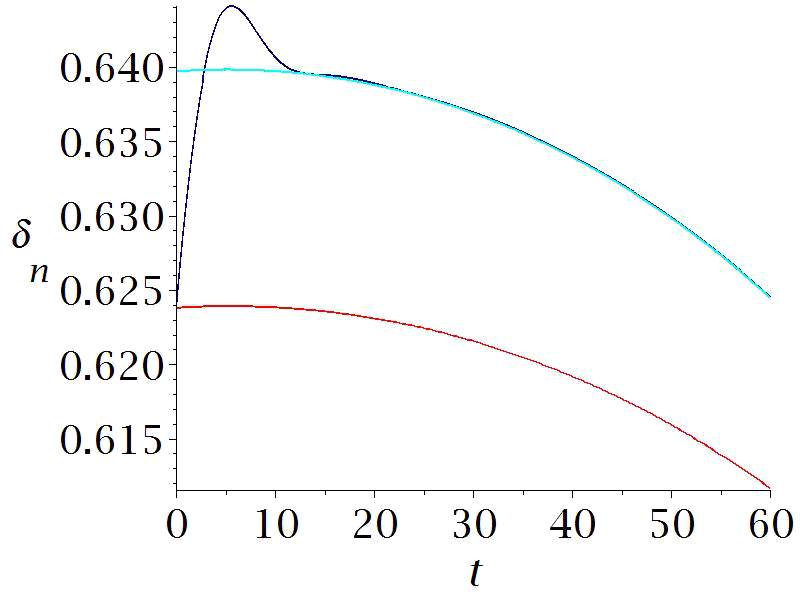}}
\vbox{\small The flatness planned curve is in red, the integration with
  feed-back in dark blue and the generalized flatness curves in green.
 The curve in cyan is the integration with the generalized
    flatness planned controls and without feed-back.}
\Description{The flatness planned curve is in red, the integration
  with feed-back in dark blue and the generalized flatness curves in
  green. The curve in cyan is the integration with the generalized
    flatness planned controls and without feed-back.}
\end{figure}

Again, the feed-back allows the integrated value to converge to the
curve planned by generalized flatness with good precision, after
initial oscillations. The curves $\delta_{l}$ and $\delta_{m}$
actually show $\delta_{m}+\updelta\delta_{m}$ and
$\delta_{n}+\updelta\delta_{n}$, including feed-back. We have
included here the integration of the general system, with we initial
values and control coming from generalized flatness. The coincidence is so
good that the generalized flatness planned curves in green are
covered by the curve in cyan provided by the integration.

\subsection{Aileron roll and parabolic
  flight}\label{subsec:parabolic_roll}

Here, we investigate a limit case with rapid changes. The trajectory
is parabolic with acceleration $g$, so the flat outputs with $\beta$
is unusable. We use $\mu$, setting $\mu=\pi/2t$. A
fighter would have been more credible, but we could only make the
feed-back work with the GTM. The horizontal speed is
$100\hbox{km}/\hbox{h}$. 

\begin{figure}[ht]\label{fig::GTM_para_roll}
  \caption{Aileron roll and parabolic flight with the GTM.}
  \hbox to \hsize{
    \includegraphics[width=6cm]{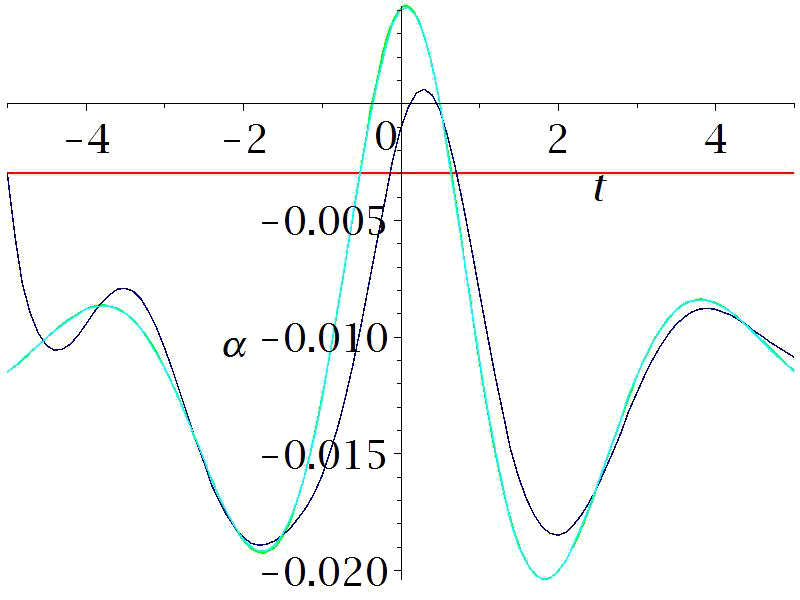}\hfill
    \includegraphics[width=6cm]{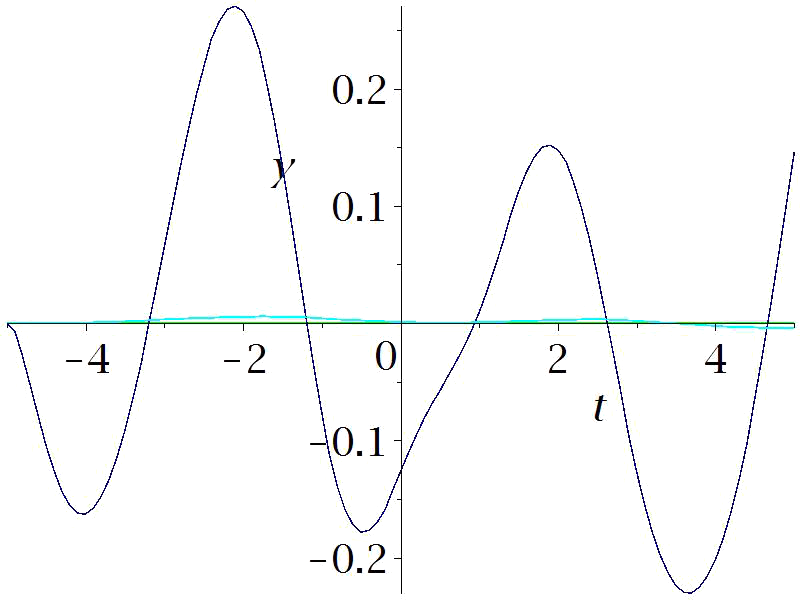}}
  \hbox to \hsize{
    \includegraphics[width=6cm]{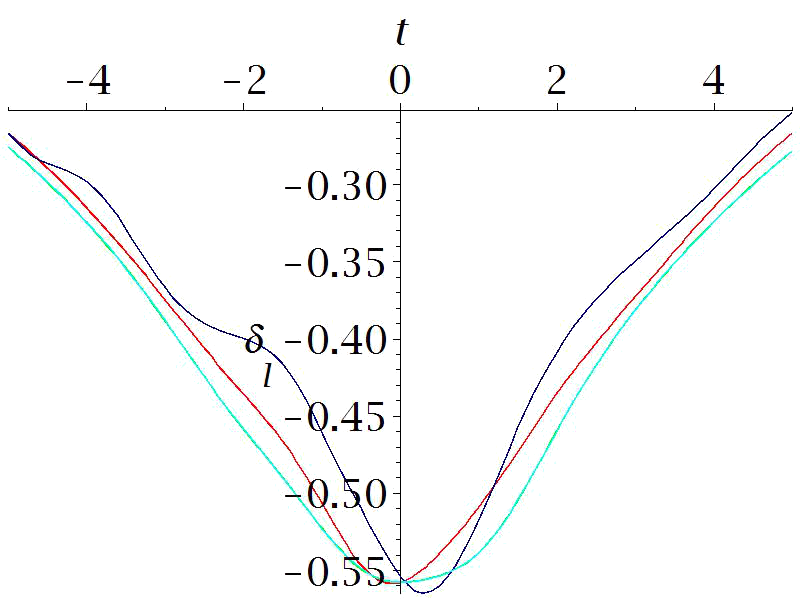}\hfill
    \includegraphics[width=6cm]{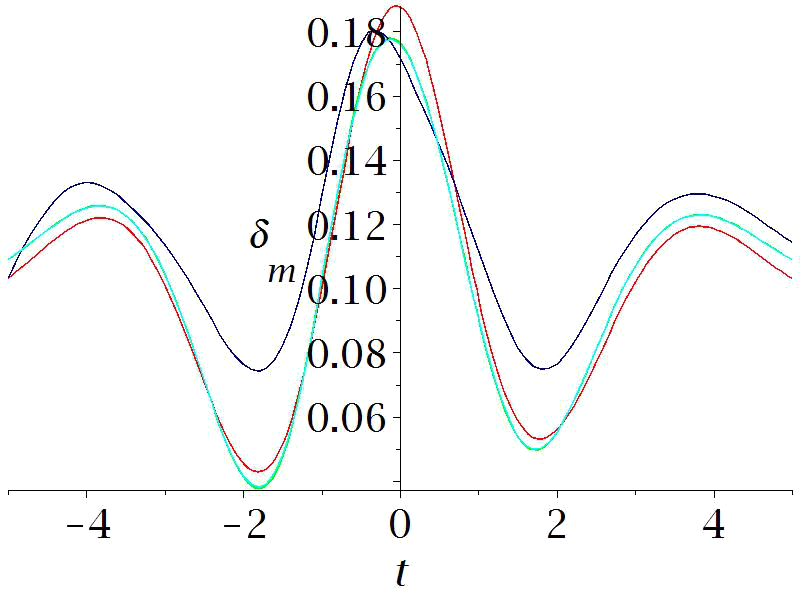}}
    
  \vbox{\small The flatness planned curve is in red, the integration with
    feed-back in dark blue and the generalized flatness curve in
    green. The curve in cyan is the integration with the generalized
    flatness planned controls and without feed-back.}
  \Description{The
    flatness planned curve is in red, the integration with feed-back in
    dark blue and the generalized flatness curve in green. The curve
    in cyan is the integration with the generalized flatness planned
    controls without feed-back, that match very well the generalized
    flatness prediction.}
\end{figure}
\vfill\eject

We see that the feed-back permits to follow the generalized flatness
planned curve, but things are moving too fast to keep always the two
curves close. The integration in cyan with the generalized
flatness planned control, without feed-back, remains very close to the
prediction, which confirms that the generalized flatness
parametrization is a good approximation of a solution of the real
system. \textit{E.g.}, a small discrepancy of about $0.5$cm, is observed
for $y$ at $t=5.$, one of the only state function for which the curve in green
appears bellow the cyan one. The computation time is $647$sec for the
motion planning and $402$sec for the simulation.

To better appreaciate the convergence of the generalized flatness
loop, we have computed the values for the controls $F$, $\delta_{l}$,
$\delta_{m}$ and $\delta_{n}$ at $t=-1.9$ a time  for which
the differences with the plain flatness values are much
appreciable. They
are given in the table bellow.

{\footnotesize

$$
\begin{array}{|l|l|l|l|l|l|l|l|l|}
\hline
 &J=0&J=1&J=2&J=3&J=4&J=5&J=6&J=7\\
\hline
F        &-2.36&8.40 &8.56&8.610&8.624&8.628&8.6304&8.6309\\
\hline
\delta_{l}&-0.44&-0.45&-0.462&-0.4642&-0.4647&-0.4648&-0.46493&-0.464918\\
\hline
\delta_{m}&0.04&0.04&0.039&0.0389&0.0387&0.03872&0.038730&0.038731\\ 
\hline
\delta_{n}&0.05&0.07&0.085&0.0871&0.087&0.08800&0.087997&0.0880978\\
\hline
\end{array}
$$

}

The theoretical study of convergence is of course of a great interest, but it is
known that such a property is not mandatory for
applications. \textit{E.g.}, some divergent series, using smallest term
trunctation, can provide accurate and fast
computations. See~\cite{Ramis1993}. 

\section{Generalized flatness from the theoretical
  standpoint}\label{sec:theory} 

The flat
parametrization only involves a finite number of derivatives, which is
the basis of all known necessary conditions of flatness
(see~\cite{Sluis1993,Rouchon1994,Ollivier1998}). We have seen that our
motion planning is a limit that potentially involves an infinite number
of derivatives, as the evaluation for the controls $\delta$ at step
$j+1$ depends on the second derivative of their evaluation at step
$j$.  This gives some credibility to a folkloric conjecture, claiming
that \textit{all controllable systems are flat if functions of an
  infinite number of derivatives are allowed}. We propose some
elements of interpretation in the linear case.

We may indeed consider the simple system
$x'=y+\epsilon y'$. When $\epsilon$ is $0$, $x$ is a flat output. For
$\epsilon>0$, we may choose
$\zeta_{\epsilon}:=x-\epsilon y$. However, we can keep $x$ as a \emph{generalized flat output}. Indeed,
one may write
$y=\sum_{i\in\N}(-1)^{i}\epsilon^{i}(\dd/\dd t)^{i}x$. This series
will converge if $x$ is analytic with a convergence radius greater
that $1/\epsilon$. Moreover, if there exists a linear operator $L$ in
$\R[\dd/\dd t]$ such that $Lx=0$ and $1+\epsilon \dd/\dd t$, as well as
$\dd/\dd t$, are not a
factors of $L$, then there exists $M$ and $N$ such that
$ML+N(1+\epsilon \dd/\dd t)=1$, so that $y=Nx'$. Taking for $L$ the
sequence $(\dd/\dd t)^{i}$, the sum that gives the value of $y$
becomes trivially finite. This situation is close to our considerations about
calibration in subsec.~\ref{subsec:calibration}.  But this can work also
with any operator $\prod_{i=1}^{k}(\dd/\dd t - \lambda_{i})^{i}$, such as
those that we met for designing feed-backs in subsec.~\ref{subsec:feed-back}. 
\vfill\eject

\section*{Conclusion}

We have seen how computer algebra may help to investigate
the validity of some simplifications required to reduce to a flat
model. Although we could rely on very classical algorithmic tools, some
investment have been required to work out for our experiments an
implementation with acceptable computation times. One also need a joint
use of symbolic and numeric computations.

A slight modification of the code used with the simplified flat model
have made possible the direct computation of an accurate motion
planning for the original non flat system, an observation that cannot
be a mere artefact and so requires a theoretical explanation.

One cannot predict if this notion of generalized flatness will have
actual applications. The theoretical difficulties are also unkown, but
the unanswered problems related to flatness show that limited
theoretical knowledge is not an obstacle to applicability, as long as
computations are fast and results reliable. The complexity of the
model used here could justify some optimism for computational success
with much simpler examples, such as the car with two deported
trailers, known not to be flat~\cite{Rouchon1993}.

Those investigations include an algorithmic aspect. \textit{E.g.}, one
may ask whether is it possible to compute the generalized
parametrization in a faster way, using some kind of Newton method,
which could also help to investigate the convergence of the
process. So, even if the applicability should be limited,
computational issues may remain of some interest.  
\bigskip

\noindent\textbf{Thanks} To Yirmeyahu J. Kaminski, Jean Lévine and
anonymous referees for their patience, rereading and suggestions.

\bibliographystyle{amsplain}

\bibliography{Ollivier-ISSAC.bib}

\end{document}